\crefname{appsec}{Appendix}{Appendices}
\algnewcommand\algorithmicinput{\textbf{Input:}}
\algnewcommand\algorithmicoutput{\textbf{Initialization:}}
\algnewcommand\Input{\item[\algorithmicinput]}
\algnewcommand\Output{\item[\algorithmicoutput]}
\theoremstyle{plain}
\newtheorem{theorem}{Theorem}[section]
\newtheorem{proposition}[theorem]{Proposition}
\newtheorem{lemma}[theorem]{Lemma}
\newtheorem{fact}[theorem]{Fact}
\theoremstyle{definition}
\newtheorem{definition}[theorem]{Definition}
\newtheorem*{assumption*}{Assumption}
\theoremstyle{remark}
\newtheorem{remark}[theorem]{Remark}
\crefname{lemma}{Lemma}{Lemmas}
\crefname{theorem}{Theorem}{Theorems}
\crefname{definition}{Definition}{Definitions}
\crefname{fact}{Fact}{Facts}
\crefname{claim}{Claim}{Claims}
\crefname{proposition}{Proposition}{Propositions}
\newcommand{\norm}[1]{\left\lVert #1 \right\rVert}
\newcommand{\TV}[2]{d_{\mathrm{TV}}\left(#1,\, #2\right)}
\newcommand{\dist}{\mathrm{dist}}
\newcommand{\N}{\mathbb{N}}
\newcommand{\R}{\mathbb{R}}
\newcommand{\II}{\mathcal{I}}
\newcommand{\TT}{\mathcal{T}}
\newcommand{\e}{\mathrm{e}}
\renewcommand{\epsilon}{\varepsilon}
\newcommand{\set}[1]{\left\{#1\right\}}
\newcommand{\tuple}[1]{\left(#1\right)} 
\newcommand{\tp}{\tuple}
\def\*#1{\boldsymbol{#1}} 
\def\+#1{\mathcal{#1}} 
\def\-#1{\mathrm{#1}} 
\def\=#1{\mathbb{#1}} 
\def\!#1{\mathfrak{#1}} 
\def\oPr{\mathop{\mathrm{Pr}}}
\renewcommand{\Pr}[2][]{
  \oPr_{#1}\left[#2\right] } 
\newcommand{\SI}{\mathsf{SI}}
\title{Improved Mixing of Critical Hardcore Model}
\author{Zongchen Chen \thanks{School of Computer Science, College of Computing, Georgia Institute of Technology, Atlanta, Georgia, USA. Email: \texttt{chenzongchen@gatech.edu}.} 
\and Tianhui Jiang \thanks{Zhiyuan College, Shanghai Jiao Tong University, Shanghai, China. Email: \texttt{shrimp2004@sjtu.edu.cn}.}}
\date{January 7, 2026}
\pgfplotsset{compat=1.18} 
\begin{document}
	
\maketitle
\thispagestyle{empty}
\begin{abstract}
	The hardcore model is one of the most classic and widely studied examples of undirected graphical models. Given a graph $G$, the hardcore model describes a Gibbs distribution of $\lambda$-weighted independent sets of $G$. In the last two decades, a beautiful computational phase transition has been established at a precise threshold $\lambda_c(\Delta)$ where $\Delta$ denotes the maximum degree, where the task of sampling independent sets transitions from polynomial-time solvable to computationally intractable. 
	We study the critical hardcore model where $\lambda = \lambda_c(\Delta)$
	and show that the Glauber dynamics, a simple yet popular Markov chain algorithm, mixes in $\tilde{O}(n^{4+O(1/\Delta)})$ time on any $n$-vertex graph of maximum degree $\Delta\geq3$, significantly improving the previous upper bound $\tilde{O}(n^{12.88+O(1/\Delta)})$ by the recent work \cite{CCYZ24}.
	Our improvement comes from an optimal bound on the $\ell_\infty$-spectral independence for the hardcore model at all subcritical fugacity $\lambda < \lambda_c(\Delta)$.
\end{abstract}
	
\newpage
	
\setcounter{page}{1}
	
\section{Introduction}

The \emph{hardcore model} is one of the most fundamental undirected graphical models that has been extensively studied in statistical physics, social science, probability theory, combinatorics, and computer science.

Given a graph $G=(V,E)$, we let $\II(G)$ denote the collection of all independent sets of $G$, where we recall that an independent set is a subset of vertices inducing no edges. 
The \emph{Gibbs distribution} $\mu_{G,\lambda}$ associated with the hardcore model on $G$ is parameterized by a vertex weight $\lambda > 0$ called the \emph{fugacity}. 
Each independent set $\sigma \in \II(G)$ receives a probability density given by
\begin{align*}
	\mu_{G,\lambda}(\sigma) = \frac{\lambda^{|\sigma|}}{Z_{G,\lambda}},
\end{align*}
where $Z_{G,\lambda}$ is a normalizing constant call the \emph{partition function} and is defined as
\begin{align*}
	Z_{G,\lambda} = \sum_{\sigma \in \II(G)} \lambda^{|\sigma|}.
\end{align*}

Perhaps the most amazing property of the hardcore model is the phase transition phenomenon associated with it. In fact, the hardcore model was originally proposed by statistical physicists to study and understand the phase transition in systems of hardcore gas particles. 
Let $\Delta \ge 3$ denote the maximum degree of the underlying graph.
The tree-uniqueness threshold $\lambda_c(\Delta) := \frac{(\Delta-1)^{\Delta-1}}{(\Delta-2)^\Delta}$ characterizes the uniqueness of the hardcore Gibbs measure on the infinite $\Delta$-regular tree. Furthermore, it also describes the existence of long-range correlations. 
Let each vertex be associated with a Bernoulli random variable, called the \emph{spin}, indicating whether the vertex is \emph{occupied} (i.e., included in the independent set) or \emph{unoccupied} (i.e., not included in the independent set).
Then, for small fugacity $\lambda \le \lambda_c(\Delta)$ the configuration at distance $\ell$ from the root has a vanishing influence on the root as $\ell$ tends to infinity, while for large fugacity $\lambda > \lambda_c(\Delta)$ the correlation is always bounded away from zero.

In the past two decades, a beautiful computational phase transition has been fully established for the problem of sampling from the hardcore model on graphs of maximum degree $\Delta$, precisely around the uniqueness threshold $\lambda_c(\Delta)$.
For $\lambda < \lambda_c(\Delta)$, there exist deterministic approximate counting algorithms for estimating the partition function \cite{weitz2006counting,Bar06,peters2019conjecture}, which in turn gives approximate samplers via standard reduction. 
Meanwhile, for $\lambda > \lambda_c(\Delta)$, no polynomial-time approximate counting and sampling algorithms exist assuming $\textsf{RP} \neq \textsf{NP}$ \cite{sly2010computational,sly2012computational,GSV16}.

While all deterministic approximate counting algorithms run in polynomial time, they suffer from a pretty slow runtime.
For example, Weitz's algorithm \cite{weitz2006counting} runs in time $n^{O( \frac{1}{\delta} \log \Delta)}$ where $\Delta$ denotes the maximum degree and $\delta \in (0,1)$ the slackness of the fugacity (i.e., $\lambda = (1-\delta) \lambda_c(\Delta)$).
In practice, Markov chain Monte Carlo (MCMC) algorithms provide a simpler and significantly faster method for generating random samples from high-dimensional distributions, including the hardcore model studied in this work. 
Among them, the \emph{Glauber dynamics} (also known as the \emph{Gibbs sampler}) is one of the most important and popular examples.
The Glauber dynamics performs a random walk in the space $\II(G)$ of independent sets and, in each step, either stays the same or moves to an adjacent set whose Hamming distance to the current set is 1.
More specifically, from the current independent set $\sigma_t \in \II(G)$, the algorithm picks a vertex $v \in V$ uniformly at random and updates its spin: Let $S = \sigma_t \setminus \{v\}$; if $S \cup \{v\} \notin \II(G)$ then set $\sigma_{t+1} = S = \sigma_t$; otherwise, set $\sigma_{t+1} = S \cup \{v\}$ with probability $\lambda/(1+\lambda)$ and, mutually exclusively, set $\sigma_{t+1} = S$ with probability $1/(1+\lambda)$.

Let $P_{\mathrm{GD}}$ denote the transition matrix of the Glauber dynamics.
From basic Markov chain theories it is easy to show that the Glauber dynamics $P_{\mathrm{GD}}$ is irreducible, aperiodic, and reversible with respect to the Gibbs distribution $\mu_{G,\lambda}$, which is the unique stationary distribution (i.e., $\mu_{G,\lambda} P_{\mathrm{GD}} = \mu_{G,\lambda}$).
The mixing time of Glauber dynamics is defined as
\begin{align*}
	T_{\mathrm{mix}}(P_{\mathrm{GD}}) = \max_{\sigma_0 \in \II(G)} \min_{t \in \N} \left\{ \TV{P^t(\sigma_0, \cdot)}{\mu_{G,\lambda}} \le \frac{1}{4} \right\},
\end{align*}
where $\sigma_0$ is the initial independent set, $P_{\mathrm{GD}}^t(\sigma_0, \cdot)$ is the distribution of the chain after $t$ steps when starting from $\sigma_0$, and $d_{\mathrm{TV}}(\cdot,\cdot)$ denotes the total variation distance.

In the past years, exciting progress has been made in understanding the mixing time of Glauber dynamics for the hardcore model. 
Anari, Liu, and Oveis Gharan introduced a highly powerful technique known as \emph{spectral independence} \cite{anari2020spectral}, leading to significant advancements in this area, including resolutions to major open problems regarding mixing properties.
We refer to \cite{liu2023spectral,stefankovic2023lecture} for a thorough introduction of this technique.
In the subcritical regime (i.e., $\lambda < \lambda_c(\Delta)$), the mixing time of the Glauber dynamics was shown to be nearly linear $O(n \log n)$ \cite{anari2020spectral,chen2021optimal,CFYZ22optimal,chen2022localization}. 
Meanwhile, it was long known that in the supercritical regime (i.e., $\lambda > \lambda_c(\Delta)$), the mixing time could be exponentially large $\exp(\Omega(n))$ as witnessed by random $\Delta$-regular bipartite graphs \cite{mossel2009hardness}.

In a very recent work \cite{CCYZ24}, the mixing property is further investigated at the critical point (i.e., $\lambda = \lambda_c(\Delta)$).
For the upper bound, the mixing time of Glauber dynamics is $\tilde{O}(n^{2+4\e+O(1/\Delta)})$ on any $n$-vertex graph of maximum degree $\Delta$. 
For the lower bound, there exists an infinite sequence of graphs such that the mixing time is $\Omega(n^{4/3})$, which is, in particular, super-linear. 

In this work, we present an improved mixing time upper bound for the Glauber dynamics on the critical hardcore model.

\begin{theorem}
	\label{thm:main}
	For any $n$-vertex graph $G=(V,E)$ of maximum degree $\Delta \ge 3$, the Glauber dynamics for the hardcore model on $G$ with fugacity $\lambda = \lambda_c(\Delta)$ satisfies
	\begin{align*}
		T_{\mathrm{mix}}(P_{\mathrm{GD}}) =  O\left( n^{4+\frac{4}{\Delta-2}} \log \Delta\right).
	\end{align*}
\end{theorem}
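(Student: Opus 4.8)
The plan is to split the argument into two parts: the combinatorial heart, which is to prove that the hardcore model at fugacity $\lambda\le(1+\alpha/\sqrt n)\lambda_c(\Delta)$ on any $n$-vertex graph of maximum degree $\Delta$ is $O_\alpha(\sqrt n)$-spectrally independent uniformly over all pinnings; and then a comparatively routine conversion of this into the claimed mixing bound. For the second part I would invoke the by-now-standard spectral-independence-to-mixing machinery of \cite{ALO20,CLV21,CFYZ22,CE22} in the form used by \cite{CCYZ24} for the critical regime, which needs only two additional inputs: (a) spectral independence is preserved under conditioning, since pinning some spins yields the hardcore model on an induced subgraph with at most $n$ vertices, hence still $O_\alpha(\sqrt n)$-spectrally independent; and (b) marginal boundedness, which holds with $1/b=O(\Delta)$ because $\lambda_c(\Delta)=O(1)$ and every marginal ratio lies in $(0,\lambda_c(\Delta)]$, so every unpinned occupation probability is $\Omega(1/\Delta)$. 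Feeding $\eta=O_\alpha(\sqrt n)$ and this $b$ into that framework produces a bound of the stated shape $n^{2+2\eps+O(1/\Delta)}\cdot\poly\log$: the $O(1/\Delta)$ term and the $\log\Delta$ factor are inherited verbatim from \cite{CCYZ24} (tracking the arity $\Delta-1$ of the self-avoiding-walk tree and $\log(1/b)$ respectively), and the whole gain over \cite{CCYZ24} is that replacing the trivial spectral-independence bound $O(n)$ by $O(\sqrt n)$ halves the contribution of the tunable parameter $\eps$. I expect this part to be bookkeeping.

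For the spectral-independence bound itself I would, as usual, control the largest eigenvalue of the (signed) influence matrix of $\mu_{G,\lambda}$ by passing to the self-avoiding-walk tree: by Weitz's correspondence \cite{Wei06} for the hardcore model, the relevant quantity at a vertex $u$ is dominated by $\sum_{v\in T}|\Psi_T(r\to v)|$, maximized over boundary conditions, where $T$ is the SAW tree of $G$ at $u$ — a subtree of the infinite $\Delta$-ary tree whose root has at most $\Delta$ children and every other vertex at most $\Delta-1$, of depth at most $n$ since every root-to-leaf path is a self-avoiding walk. The point that makes this require genuine work, rather than the trivial estimate, is that the cancellation-free bound is hopeless here: at the critical fixed point the one-step contraction of the marginal-ratio recursion $R_v=\lambda\prod_{c:\,c\text{ child of }v}(1+R_c)^{-1}$ satisfies $|\partial R_v/\partial R_c|=\hat R/(1+\hat R)=1/(\Delta-1)$, which is exactly cancelled by the $(\Delta-1)$-fold branching, so the cancellation-free per-level influence sum is $\Theta(1)$ and yields only the trivial $\Theta(n)$ total; just above $\lambda_c(\Delta)$ it is moreover exponentially loose on expander-like graphs. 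One therefore has to exploit both the alternating sign pattern of hardcore influences along a path and the fact that the critical fixed point is parabolic (second order) rather than hyperbolic.

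This is where the online-decision-making / site-percolation framework enters. I would reveal $T$ level by level; the contribution of each branch is a product of potential-normalized local contraction factors (using the standard hardcore potential $\phi$, for which this amortized factor is $\le 1$, with equality only at the parabolic fixed point) against the $(\Delta-1)$ branchings, which at criticality are in exact balance. Since the sub-configuration hanging below each revealed vertex is adversarial, the revealing is naturally phrased as an online decision process in which, at each vertex, an adversary picks the worst feasible continuation; feasibility (the ratios must themselves satisfy the recursion, hence lie in a controlled interval) forces the balance to stay exactly critical rather than supercritical, so the mass surviving to depth $\ell$ is dominated by the depth-$\ell$ slice of the open cluster of the root in critical site percolation on the $(\Delta-1)$-ary tree. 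The parabolicity of the fixed point then contributes an extra $\ell^{-1/2}$ polynomial damping per surviving branch — this is the source of the $\sqrt{\cdot}$, and it is the same mechanism responsible for the critical behavior of the Ising model — so the depth-$\ell$ contribution is $O(\ell^{-1/2})$ and $\sum_{\ell\le n}\ell^{-1/2}=O(\sqrt n)$. A slight excess fugacity $(1+\alpha/\sqrt n)\lambda_c(\Delta)$ destabilizes the per-level balance by $O(\alpha/\sqrt n)$, which is negligible until depth $\Theta_\alpha(n)$; since $T$ has depth at most $n$, this only inflates the bound by an $O_\alpha(1)$ factor, which is exactly why the theorem tolerates fugacity a little above $\lambda_c(\Delta)$.

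The main obstacle is the percolation analysis of the previous paragraph: designing the online decision process so that the worst-case boundary condition and worst-case sub-configuration are genuinely governed by a critical (not supercritical) branching/percolation process on the $(\Delta-1)$-ary tree, and extracting the sharp $\ell^{-1/2}$ decay of the depth-$\ell$ contribution from the second-order geometry of the recursion. The SAW-tree reduction, the passage from spectral independence to mixing, and the marginal-boundedness estimate are all standard by comparison.
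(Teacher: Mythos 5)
Your high-level roadmap matches the paper's (SAW tree, a discrepancy/site-percolation process on the $(\Delta-1)$-ary tree, an online decision-making reduction, then the standard field-dynamics conversion from spectral independence to mixing, where replacing $O(n)$ by $O(\sqrt n)$ at criticality halves the exponent $4\e$ to $2\e$ — though note the $\e$ in the exponent is Euler's constant coming from the subcritical bound $c^*=4\e(1+\frac{1}{\Delta-2})$ of \cite{CCYZ24}, not a tunable parameter). The genuine gap is in the mechanism you propose for extracting the $\sqrt{n}$. You claim the percolation domination plus "parabolicity" yields a per-depth contribution $O(\ell^{-1/2})$, summed over depth $\le n$. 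This is false for the object your framework actually controls: at criticality the fixed-point marginal on the $d$-ary tree is exactly $1/d$, so the uniform choice $p_v=1/d$ is feasible (it saturates \cref{lem:contract}) and the expected number of discrepancies at depth $\ell$ is $d^\ell\cdot d^{-\ell}=1$ for every $\ell$ — no per-level decay. The coupling/percolation route deliberately discards the sign cancellations, so any $\ell^{-1/2}$ damping would have to come from elsewhere. In the paper the $\sqrt n$ has a different source: the truncation $\E{N\wedge n}$ comes from the coupling bound (\cref{prop:CI-to-SI}), not from tree depth, and one bounds $\Pr{N\ge m}\le O(m^{-1/2})+\Pr{N^*=\infty}$ via the total cluster-size tail of (near-)critical branching, so that $\sum_{m\le n}\Pr{N\ge m}=O(\sqrt n)$. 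Your treatment of the slightly supercritical fugacity inherits this problem and is quantitatively wrong: a per-level imbalance of $1+\Theta(\alpha/\sqrt n)$ compounds to $e^{\Theta(\alpha\sqrt n)}$ by depth $n$, so it is emphatically not "negligible until depth $\Theta_\alpha(n)$"; the correct statement is that the survival probability of the slightly supercritical cluster is $O(n^{-1/2})$, contributing $n\cdot O(n^{-1/2})=O(\sqrt n)$ after truncation (this is exactly how \cref{thm:percolation-full} uses \cite[Lemmas 4.7, 4.8]{CCYZ24}).

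Two further pieces that your proposal gestures at but does not supply, and which are the actual technical content of the paper: (i) the feasibility constraint you need is not "ratios lie in a controlled interval" but the two-level product inequality $p_v\sum_{w\in L(v)}p_w\le\frac{1+\e\eps}{\Delta-1}$ on subtree marginals (\cref{lem:contract-full}), which is strictly stronger than the potential-function contraction of \cite{LLY13} and is what makes the comparison to critical percolation possible for hardcore (for Ising one can simply take $p_v=1/d$); and (ii) the adversarial revealing must actually be solved: the worst case under the product constraint is not the uniform $p_v=1/d$ choice, and the paper resolves the optimization via second-order stochastic dominance, unimodality of Poisson binomial distributions (Darroch's rule), and the hitting-time theorem (\cref{thm:game-main}), showing the optimal adversary is dominated by $\mathrm{Bin}(d,\gamma)$. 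Without (i), (ii), and the cluster-size-tail/truncation mechanism, the step from "balance stays critical" to $\E{N\wedge n}=O(\sqrt n)$ does not go through as you have written it. The spectral-independence-to-mixing part of your plan is essentially the paper's \cref{prop:SS-mixing-full} and is fine.
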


Our mixing time upper bound $\tilde{O}(n^{4 + O(1/\Delta)})$ significantly improves over the $\tilde{O}(n^{12.88+O(1/\Delta)})$ mixing time previously established in \cite{CCYZ24}.

Similar to \cite{CCYZ24}, \cref{thm:main} is proved via the spectral independence framework.
Our main contribution is to establish an optimal bound on $\ell_\infty$-spectral independence (see \cref{def:SI}) in the whole subcritical regime $\lambda = (1-\delta)\lambda_c(\Delta)$ where $\delta \in (0,1)$.

\begin{remark}
    In the previous version of the paper, which appeared in the proceedings of \emph{Approximation, Randomization, and Combinatorial Optimization, Algorithms and Techniques (APPROX/RANDOM 2025)}, we claimed a mixing time upper bound of $\tilde{O}(n^{7.44+O(1/\Delta)})$ by establishing the $\ell_\infty$-spectral independence with constant $C_\SI^\infty = O(\sqrt{n})$ \emph{at criticality}. 
    However, we later discovered a fatal error in this claim.
    The claimed improvement in our previous version relied crucially on a lemma in the previous work \cite{CCYZ24} of the first author, which, informally speaking, states that one can reduce proving $\ell_\infty$-spectral independence on general graphs (more precisely, a stronger notion known as \emph{coupling independence} \cite{chen2023near}) to establishing a truncated version of it on trees. We later found a serious loophole in the proof of the lemma. In fact, we found that this lemma is incorrect and the claimed comparison of coupling independence between general graphs and trees does not hold, for example, in the ferromagnetic Ising model. 
    This invalidates the $\ell_\infty$-spectral independence of order $O(\sqrt{n})$ we previously claimed.

    In this revision, we adopt a completely different approach to obtain an improved mixing time upper bound $\tilde{O}(n^{4+O(1/\Delta)})$, which is even better than what we previously claimed. We focus on the $\ell_\infty$-spectral independence \emph{in the subcritical regime} rather than at criticality, and establish an \emph{optimal} constant (see \cref{thm:SI-subcritical,thm:SI-lb}), from which \cref{thm:main} readily follows.

    While we still believe $\ell_\infty$-spectral independence of order $O(\sqrt{n})$ should hold at criticality, we are unable to fix the lemma from \cite{CCYZ24} or find an alternative approach. If one could prove it, then the mixing time upper bound would be further brought down to $\tilde{O}(n^{3+O(1/\Delta)})$.
\end{remark}

\section{Preliminaries}
\label{sec:pre}

\subsection{Spectral independence}
\label{subsec:SI}
The core result of this work is to establish an optimal bound on the $\ell_\infty$-spectral independence for the subcritical hardcore model, from which \cref{thm:main} readily follows by sophisticated spectral independence techniques that have been developed in a recent line of works.

The following notion of influences is needed to define the meaning of spectral independence. 
\begin{definition}[Influence, \cite{anari2020spectral}]
	Let $\mu$ be a distribution over $\{0,1\}^n$, and let $A = \{i \in [n]: \Pr[\mu]{\sigma_i = 0} > 0 \land \Pr[\mu]{\sigma_i = 1} > 0\}$ be the set of unfixed coordinates.
	For any $i,j \in A$, define the \emph{(pairwise) influence} from $i$ to $j$ as
	\begin{align*}
		\Psi_\mu(i,j) := \Pr[\sigma\sim\mu]{\sigma_j = 1 \mid \sigma_i = 1} - \Pr[\sigma\sim\mu]{\sigma_j = 1 \mid \sigma_i = 0}.
	\end{align*}
	Note that $\Psi_\mu(i,i)=1$ for any $i \in A$.
	Further, let $\Psi_\mu$ be an $|A| \times |A|$ influence matrix with entries defined as above.
\end{definition}

The influence matrix $\Psi_\mu$ is not symmetric in general.
We remark that all eigenvalues of the influence matrix $\Psi_\mu$ are real; see, e.g., \cite{anari2020spectral}.

For a distribution $\mu$ over $\{0,1\}^n$, a \emph{pinning} $\tau \in \{0,1\}^\Lambda$ is a partial configuration on a subset of coordinates $\Lambda \subseteq [n]$ with positive density, i.e., $\Pr[\sigma\sim\mu]{\sigma_\Lambda = \tau}>0$.
Given a pinning $\tau$, we let $\mu^{\Lambda \gets \tau}$ denote the conditional distribution where the configuration on $\Lambda$ is fixed as $\tau$.
When the subset $\Lambda$ is clear or unimportant, we write $\mu^\tau = \mu^{\Lambda \gets \tau}$ for simplicity.

\begin{definition}[Spectral independence, \cite{anari2020spectral}]
\label{def:SI}
	Let $\mu$ be a distribution over $\{0,1\}^n$.
	We say $\mu$ satisfies \emph{spectral independence} with constant $C_\SI$ if for any pinning $\tau$, it holds
	\begin{align*}
		\lambda_{\-{max}}(\Psi_{\mu^\tau}) \le C_\SI,
	\end{align*}
    where $\lambda_{\-{max}}(\Psi_{\mu^\tau})$ denotes the maximum eigenvalue of $\Psi_{\mu^\tau}$.
	We say $\mu$ satisfies \emph{$\ell_\infty$-spectral independence} with constant $C^\infty_\SI$ if for any pinning $\tau$, it holds
	\begin{align*}
		\norm{\Psi_{\mu^\tau}}_\infty := 
		\max_{i \in A_{\mu^\tau}} \sum_{j \in A_{\mu^\tau}} \left| \Psi_{\mu^\tau}(i,j) \right|
		\le C^\infty_\SI.
	\end{align*}
\end{definition}

Since $\lambda_{\-{max}}(\Psi_{\mu^\tau}) \le \norm{\Psi_{\mu^\tau}}_\infty$, $\ell_\infty$-spectral independence with constant $C$ implies spectral independence with the same constant $C$.  

In the setting of the hardcore model, the influences describe the correlation between two vertices, represented as Bernoulli random variables indicating whether the vertices are occupied.
Roughly speaking, the influence of one vertex on the other represents the difference of the marginal distribution on the second vertex when flipping the first vertex from occupied to unoccupied.
Meanwhile, spectral independence describes a decay of correlation phenomenon in a novel spectral way.

\subsection{Mixing via spectral independence}
\label{subsec:SI-mixing}

The mixing time of Glauber dynamics can be derived from spectral independence.
In this subsection, we aim to provide a minimal introduction and apply the main spectral independence technique as a black box.
We refer the readers to \cite{liu2023spectral,CCYZ24,chen2022localization} for more background and details.

The following proposition shows that rapid mixing of the critical hardcore model can be derived from spectral independence in the whole subcritical regime.

\begin{proposition}[Spectral independence implies rapid mixing, \cite{CCYZ24,chen2022localization}]
	\label{prop:SS-mixing-full}
	Consider the hardcore model on an $n$-vertex graph $G=(V,E)$ of maximum degree $\Delta \ge 3$. 
	Suppose that for any $\lambda = (1-\delta)\lambda_c(\Delta)$ where $\delta \in (0,1)$, the Gibbs distribution $\mu_{G,\lambda}$ satisfies spectral independence with constant $\rho/\delta$, where $\rho > 0$ is an absolute constant.
	Then, for $\lambda = \lambda_c(\Delta)$, the mixing time of the Glauber dynamics for $\mu_{G,\lambda_c(\Delta)}$ satisfies
	\begin{align*}
		T_{\mathrm{mix}}(P_{\mathrm{GD}}) = O\left( n^{2+\rho} \log \Delta \right).
	\end{align*}
\end{proposition}

\cref{prop:SS-mixing-full} can be deduced from the \emph{localization scheme} framework introduced in \cite{chen2022localization}. Our version here was stated and proved in \cite{CCYZ24}.
We provide a proof sketch of \cref{prop:SS-mixing-full} in \cref{subsec:pf-SI-mixing} for completeness.

\subsection{Establishing spectral independence}

To establish $\ell_\infty$-spectral independence (which then implies spectral independence), it suffices to consider the sum of absolute influences on certain associated trees known as \emph{self-avoiding walk trees} \cite{weitz2006counting}. The formal definition and construction of these trees are omitted in this paper as we only need their existence, and we refer interested readers to the works \cite{weitz2006counting,chen2023rapid}.

\begin{proposition}[\cite{chen2023rapid}] \label{prop:SI-graph-tree}
	Consider the hardcore model on an $n$-vertex graph $G$ of maximum degree $\Delta \ge 3$ with fugacity $\lambda > 0$. 
	For any $u \in V$, there exists a tree $T = T_{\mathrm{SAW}}(G,u)$ rooted at $r$ with maximum degree at most $\Delta$, such that 
	\begin{align*}
		\sum_{v \in V(G)} \left| \Psi_{\mu_{G,\lambda}}(u,v) \right|
		\le \sum_{v \in V(T)} \left| \Psi_{\mu_{T,\lambda}}(r,v) \right|,
	\end{align*}
    where $V(G)$ and $V(T)$ denote the vertex set of $G$ and $T$, respectively.
\end{proposition}

Hence, to establish $\ell_\infty$-spectral independence with constant $C$ on all \emph{graphs} of maximum degree $\Delta$, it suffices to prove the absolute influence sum of the root is at most $C$ for all \emph{trees} of maximum degree $\Delta$. Namely, via \cref{prop:SI-graph-tree} we reduce our problem from general graphs to only trees.

\section{Optimal \texorpdfstring{$\ell_\infty$-Spectral Independence}{ℓ∞-Spectral Independence}}

In this section, we establish an optimal $\ell_\infty$-spectral independence constant for the hardcore model in the tree-uniqueness regime.

\begin{theorem}[Upper bound]
	\label{thm:SI-subcritical}
	Suppose $\Delta = d+1 \ge 3$ is an integer and $\delta\in(0,1)$ is a real number.
	Consider the hardcore model on a graph $G=(V,E)$ of maximum degree $\Delta$ with fugacity $\lambda = (1-\delta) \lambda_c(\Delta)$. 
	Then, the Gibbs distribution $\mu_{G,\lambda}$ satisfies $\ell_\infty$-spectral independence with constant 
	\begin{align*}
		C^\infty_\SI = \frac{1+\hat{x}}{1-d\hat{x}} 
		\le \frac{2}{\delta} \left(1+\frac{2}{d-1}\right),
	\end{align*}
	where $\hat x = \hat{x}(d,\lambda) \in (0,1/d)$ is the unique fixed point of the function $F_{d,\lambda}(x) = \frac{\lambda(1-x)^d}{1+\lambda(1-x)^d}$ for $x\in[0,1]$.
\end{theorem}

\cref{thm:main} follows directly from \cref{thm:SI-subcritical} and \cref{prop:SS-mixing-full}.

\begin{proof}[Proof of \cref{thm:main}]
    For any graph $G$ of maximum degree $\Delta\geq3$, by \cref{thm:SI-subcritical}, for any $\delta\in(0,1)$, $\mu_{G,(1-\delta)\lambda_c(\Delta)}$ satisfies $\ell_\infty$-spectral independence with constant $\frac{\rho}{\delta}$, where $\rho = 2 \left(1+\frac{2}{\Delta-2}\right)$.
    Then, by \cref{prop:SS-mixing-full}, the mixing time of the Glauber dynamics for $\mu_{G,\lambda_c(\Delta)}$ satisfies
	\begin{align*}
		T_{\mathrm{mix}}(P_{\mathrm{GD}}) 
        =  O\left( n^{2+\rho} \log \Delta\right)
        =  O\left( n^{4+\frac{4}{\Delta-2}} \log \Delta\right),
	\end{align*}
    as desired.
\end{proof}

Before presenting the proof of \cref{thm:SI-subcritical}, we first explain the optimality of our constant $C^\infty_\SI$ in $\ell_\infty$-spectral independence.
Let $\mathbb{T}_d$ denote the infinite $d$-ary tree and $\bar{\mathbb{T}}_\Delta$ be the infinite $\Delta$-regular tree, where $\Delta = d+1\ge 3$. Since $\lambda<\lambda_c(\Delta)$, there is a unique infinite-volume Gibbs measure for the hardcore model on either $\mathbb{T}_d$ or $\bar{\mathbb{T}}_\Delta$ with fugacity $\lambda$ \cite{spitzer1975markov,kelly1985stochastic,weitz2006counting}.
For the Gibbs measure on $\mathbb{T}_d$, the occupancy probability of the root is equal to $\hat{x}$, the unique fixed point of the associated tree recurrence $F_{d,\lambda}$; see \cref{fact:tree-recursion}.
Furthermore, it can be shown that for the Gibbs measure on $\bar{\mathbb{T}}_\Delta$, the absolute influence of the root $r$ on a vertex $v$ is expressed by this occupancy probability $\hat{x}$, specifically,
\begin{align*}
	\left|\Psi_{\mu_{\bar{\mathbb{T}}_\Delta,\lambda}}(r,v)\right| = \hat{x}^{\dist(r,v)},
\end{align*}
where $\dist(r,v)$ is the distance between $r$ and $v$; see \cref{fact:prod}.
Therefore, the sum of absolute influences of the root can be calculated as
\begin{align*}
	\sum_{v \in V(\bar{\mathbb{T}}_\Delta)} \left|\Psi_{\mu_{\bar{\mathbb{T}}_\Delta,\lambda}}(r,v)\right|
	= 1 + \sum_{k=1}^\infty (d+1) d^{k-1} \hat{x}^k
	= \frac{1+\hat{x}}{1-d\hat{x}}.
\end{align*}
Hence, the $\ell_\infty$-spectral independence constant $C^\infty_\SI = \frac{1+\hat{x}}{1-d\hat{x}}$ from \cref{thm:SI-subcritical} is achieved by the infinite $\Delta$-regular tree $\bar{\mathbb{T}}_\Delta$.
In other words, \cref{thm:SI-subcritical} shows that, informally speaking, among all graphs of maximum degree $\Delta$, the infinite $\Delta$-regular tree $\bar{\mathbb{T}}_\Delta$ attains the largest constant for $\ell_\infty$-spectral independence.

As we focused on finite graphs, we provide the following rigorous result for lower bounds on $\ell_\infty$-spectral independence on finite trees, which essentially uses the sequence of Gibbs distributions on truncated regular trees to approximate the Gibbs measure on the infinite regular tree.
To be specific, for each $h\geq 1$, we consider the rooted $\Delta$-regular tree truncated at depth $h$, denoted by $T_{\Delta,h}$, which is obtained from the infinite $\Delta$-regular tree by removing all vertices at distance greater than $h$ from the root.
Equivalently, $T_{\Delta,h}$ is the tree with leaves at depth $h$, the root having $\Delta$ children, and every non-root, non-leaf vertex having $\Delta-1$ children.

\begin{theorem}[Lower bound]
	\label{thm:SI-lb}
	Suppose $\Delta = d+1 \ge 3$ is an integer and $\delta\in(0,1)$ is a real number.
	For $h \ge 1$, consider the hardcore model on $T_{\Delta,h}$, the rooted $\Delta$-regular tree truncated at depth $h$, with fugacity $\lambda = (1-\delta) \lambda_c(\Delta)$. 
	If $C_{\Delta,h,\lambda}$ is the optimal $\ell_\infty$-spectral independence constant for the Gibbs distribution $\mu_{T_{\Delta,h},\lambda}$, then we have
	\begin{align*}
		\lim_{h\to \infty} C_{\Delta,h,\lambda} = \frac{1+\hat{x}}{1-d\hat{x}},
	\end{align*}
	where $\hat{x}$ is the unique fixed point of $F_{d,\lambda}$ defined in \cref{thm:SI-subcritical}.
\end{theorem}

We suspect that the constant $\frac{1+\hat{x}}{1-d\hat{x}}$ is actually optimal even for the standard spectral independence. The sequence of graphs that would achieve this constant for spectral independence is random regular (symmetric) bipartite graphs, where it is known that the local distribution within a ball of constant radius at a random vertex converges to that on the infinite regular tree \cite{sly2012computational}.

\subsection{Upper bound}

In this subsection, we establish \cref{thm:SI-subcritical}.
With \cref{prop:SI-graph-tree} at hand, we need to establish the following bound on the absolute influence sum on trees, from which \cref{thm:SI-subcritical} readily follows.
\begin{lemma}\label{lem:sub-SI-tree}
    Let $T=(V,E)$ be a tree rooted at $r$ with maximum degree at most $\Delta = d+1$, where $d \ge 2$. 
    For $\delta\in (0,1)$, consider the hardcore model on $T$ with fugacity $\lambda = (1-\delta) \lambda_c(\Delta)$. 
    Then, it holds that
    \begin{align*}
        \sum_{v \in V} \left| \Psi_{\mu_{T,\lambda}}(r,v) \right|
        \le \frac{1+\hat{x}}{1-d\hat{x}},
    \end{align*}
    where $\hat{x}$ is the unique fixed point of $F_{d,\lambda}$ defined in \cref{thm:SI-subcritical}.
\end{lemma}

Let $T=(V,E)$ be a tree rooted at $r$. 
For every vertex $v\in V$, let $T_v$ denote the subtree of $T$ rooted at $v$ that consists of all descendants of $v$; in particular, $T_r = T$.
For any $v\in V$, let $L(v)$ denote the set of children of $v$ in $T$.
Let $L_k(r)$ denote the set of vertices that are at distance $k$ from the root $r$.

Consider the hardcore model on $T$ with fugacity $\lambda>0$.
For each vertex $v \in V$, let $p_v$ denote the probability that $v$ is occupied in the hardcore model on the subtree $T_v$ rooted at $v$, i.e., 
\begin{align*}
	p_v := \Pr[\mu_{T_v,\lambda}]{\sigma_v = 1}.
\end{align*}
Furthermore, we denote the sum of absolute influences of the root as
\begin{align*}
	\Phi(T,\lambda) := \sum_{v \in V} \left| \Psi_{\mu_{T,\lambda}}(r,v) \right|.
\end{align*}

We need the following standard facts regarding occupancy probabilities and influences on trees.
\begin{fact}[Tree recursion, \cite{weitz2006counting}]
	\label{fact:tree-recursion}
	Consider the hardcore model on a tree $T=(V,E)$ rooted at $r$ with fugacity $\lambda>0$.
	For any $v\in V$, we have
	\begin{align*}
		\frac{p_v}{1-p_v} = \lambda \prod_{w\in L(v)} (1-p_w).
	\end{align*}
\end{fact}

\begin{fact}[\cite{chen2023rapid}]
	\label{fact:prod}
	Consider the hardcore model on a tree $T=(V,E)$ rooted at $r$ with fugacity $\lambda>0$.
	For any $v\in V$, if $r=u_0,u_1,\cdots,u_m=v$ is the unique path from $r$ to $v$ in $T$, then we have
	\begin{align*}
		\left| \Psi_{\mu_{T,\lambda}}(r,v) \right| = \prod_{i=1}^m p_{u_i}.
	\end{align*}
\end{fact}

We first consider a slightly different version of \cref{lem:sub-SI-tree}, where every vertex in the tree has at most $d$ children; that is, every vertex except for the root has degree at most $\Delta=d+1$, and the root has degree at most $\Delta-1=d$.
This makes it easier for us to adopt a recursive argument.

\begin{lemma}\label{lem:sub-tree-SI}
	Let $T=(V,E)$ be a tree rooted at $r$ such that each vertex has at most $d$ children, where $d \ge 2$. 
	For $\delta\in (0,1)$, consider the hardcore model on $T$ with fugacity $\lambda = (1-\delta) \lambda_c(\Delta)$ where $\Delta = d+1$. 
	Then, it holds that
	\begin{align*}
		\Phi(T,\lambda)
		\le \frac{1}{1-d\hat x},
	\end{align*}
	where $\hat{x}$ is the unique fixed point of $F_{d,\lambda}$ defined in \cref{thm:SI-subcritical}.
\end{lemma}

\begin{proof}
	Let $\TT_d$ be the family of all rooted trees such that each vertex has at most $d$ children.
	For $\lambda = (1-\delta) \lambda_c(\Delta)$ where $\delta\in (0,1)$,
	we define 
	\begin{align*}
		\Phi^* = \Phi^*(d,\lambda) := \sup_{T\in\TT_d} \Phi(T,\lambda).
	\end{align*}
    We note that $\Phi^* < \infty$, since $\Phi(T,\lambda)\leq \frac{32}{\delta}$ for all $T\in\TT_d$ (see \cite{chen2023rapid}).
    We need to show that $\Phi^* \le \frac{1}{1-d\hat x}$.

    Recall that $T_w$ denotes the subtree of $T$ rooted at $w$ that consists of all descendants of $w$, and that $L_k(r)$ denotes the set of vertices that are at distance $k$ from the root $r$.
    By \cref{fact:prod}, we have
    \begin{align}\label{eq:W(T)}
        \Phi(T,\lambda) &= \left| \Psi_{\mu_{T,\lambda}}(r,r) \right| 
        + \sum_{v\in L_1(r)} \left| \Psi_{\mu_{T,\lambda}}(r,v) \right| 
        + \sum_{k=2}^{\infty}\sum_{v\in L_k(r)} \left| \Psi_{\mu_{T,\lambda}}(r,v) \right| \nonumber\\
        &= 1+\sum_{v \in L(r)} p_v+\sum_{v\in L(r)} \sum_{w\in L(v)} p_v p_w \Phi(T_w,\lambda)\nonumber\\
        &\leq 1+\sum_{v \in L(r)} \tp{ p_v+p_v\sum_{w\in L(v)}  p_w \Phi^*}, 
    \end{align}
    where the last inequality follows from $T_w\in \TT_d$. 
    By tree recursion \cref{fact:tree-recursion}, we have
\begin{align*}
    \frac{p_v}{1-p_v} = \lambda \prod_{w\in L(v)} (1-p_w)\leq \lambda \tp{1-\frac{1}{d}\sum_{w\in L(v)}p_w}^d = \lambda (1-\bar p_v)^d,
\end{align*}
where $\bar p_v = \frac{1}{d}\sum_{w\in L(v)}p_w$, and the inequality follows from the AM-GM inequality and $|L(v)|\leq d$.
    Therefore, we obtain
\begin{align*}
    p_v\leq\frac{\lambda (1-\bar p_v)^d}{1+\lambda (1-\bar p_v)^d},
\end{align*}
and consequently,
\begin{align*}
    p_v\sum_{w\in L(v)}p_w
    \leq\frac{\lambda (1-\bar p_v)^d}{1+\lambda (1-\bar p_v)^d}\sum_{w\in L(v)}p_w
    =d\bar p_v\frac{\lambda (1-\bar p_v)^d}{1+\lambda (1-\bar p_v)^d}.
\end{align*}
Combining with \cref{eq:W(T)}, we get
\begin{align*}
	\Phi(T,\lambda) &\leq 1+\sum_{v \in L(r)} \tp{ \frac{\lambda (1-\bar p_v)^d}{1+\lambda (1-\bar p_v)^d}+d\bar p_v\frac{\lambda (1-\bar p_v)^d}{1+\lambda (1-\bar p_v)^d} \Phi^*},
\end{align*} 
Suppose $q \in [0,1]$ maximizes the function $a(x):= \frac{\lambda (1-x)^d}{1+\lambda (1-x)^d}+dx\frac{\lambda (1-x)^d}{1+\lambda (1-x)^d} \Phi^* $.
Taking supremum over all $T\in\TT_d$, we have
\begin{align}\label{eq:Phi*-bound}
    \Phi^* = \sup_{T\in \TT_d} \Phi(T,\lambda) 
    \le 1 + d \tp{ \frac{\lambda (1-q)^d}{1+\lambda (1-q)^d} + dq\frac{\lambda (1-q)^d}{1+\lambda (1-q)^d} \Phi^*}.
\end{align}
We claim that
\begin{align}\label{eq:valid}
	d^2 q \frac{\lambda (1-q)^d}{1+\lambda (1-q)^d} < 1 
	\quad\Longleftrightarrow\quad
	\left(d^2 q - 1\right) \lambda (1-q)^d < 1. 
\end{align}
If $d^2 q < 1$, then \cref{eq:valid} is trivial.
Otherwise, \cref{eq:valid} follows from an application of the AM--GM inequality: 
\begin{align*}
	\left(d^2 q - 1\right) \lambda (1-q)^d 
	&= \lambda d \left(d q - \frac{1}{d}\right) (1-q)^d \\
	&\le \lambda d \left( \frac{(d q - \frac{1}{d}) + d(1-q)}{d+1} \right)^{d+1} 
	= \lambda \frac{(d-1)^{d+1}}{d^d} = \frac{\lambda}{\lambda_c(\Delta)} < 1.
\end{align*}
Combining \cref{eq:Phi*-bound,eq:valid}, we then deduce that
\begin{align*}
    \Phi^* &\leq \frac{1+ d \frac{\lambda (1-q)^d}{1+\lambda (1-q)^d}}{1-d^2q\frac{\lambda (1-q)^d}{1+\lambda (1-q)^d}}
    = \frac{1+(d+1)\lambda(1-q)^d}{1-(d^2q-1)\lambda(1-q)^d}
    = f(q),
\end{align*}
where the function $f:[0,1] \to \R$ is defined as
\begin{align*}
	f(x) := \frac{1+(d+1)\lambda(1-x)^d}{1-(d^2x-1)\lambda(1-x)^d}.
\end{align*}
In particular, it holds that
\begin{align*}
	\Phi^* \leq \max_{x\in[0,1]} f(x).
\end{align*}
In the remaining proof, we aim to show that
\begin{align*}
	\max_{x\in[0,1]} f(x) = f(\hat x) = \frac{1}{1-d\hat x},
\end{align*}
thus completing the proof of the lemma.

Observe that
\begin{align*}
    \frac{1}{f(x)}
    =\frac{1-(d^2x-1)\lambda(1-x)^d}{1+(d+1)\lambda(1-x)^d}
    =1-\frac{d\lambda(1+dx)}{(1-x)^{-d}+(d+1)\lambda}.
\end{align*}
Define $g:[0,1] \to \R\cup\{+\infty\}$ as  
\begin{align*}
    g(x)=\frac{(1-x)^{-d}+(d+1)\lambda}{1+dx}.
\end{align*}
Then, it holds
\begin{align*}
    \frac{1}{f(x)}=1-\frac{d\lambda}{g(x)}.
\end{align*}
We calculate that
\begin{align*}
    g'(x)&=\frac{d(1-x)^{-(d+1)}(1+dx)-d((1-x)^{-d}+(d+1)\lambda)}{(1+dx)^2}\\
    &=\frac{d}{(1+dx)^2}\tp{\frac{1+dx-(1-x)}{(1-x)^{d+1}}-(d+1)\lambda}\\
    &=\frac{d(d+1)}{(1+dx)^2}\tp{\frac{x}{(1-x)^{d+1}}-\lambda}.
\end{align*}
Since the first factor of $g'(x)$ is always positive, the sign of $g'(x)$ depends on the second factor, i.e., $\frac{x}{(1-x)^{d+1}}-\lambda$.
Let $h(x)=\frac{x}{(1-x)^{d+1}}-\lambda$.
Clearly, $h(x)$ is increasing on $[0,1]$, $h(0)<0$, $\lim_{x\uparrow1} h(x)=+\infty$. 
By the Intermediate Value Theorem, there exists a unique zero of $h(x)$ on $(0,1)$, denoted by $\hat x$.
Observe that $h(\hat x)=0$ is equivalent to $\hat x = \frac{\lambda (1-\hat x)^d}{1+\lambda (1-\hat x)^d}$, i.e., $\hat x$ is the unique fixed point of $F_{d,\lambda}(x)$. 
Therefore, we deduce that 
\begin{align*}
    \min_{x\in[0,1]} g(x) = g(\hat x)
    \quad\text{and}\quad
    \max_{x\in[0,1]} f(x) = f(\hat x).
\end{align*}
Finally, we compute that
\begin{align*}
    f(\hat x) = \frac{1+ d \frac{\lambda (1-\hat x)^d}{1+\lambda (1-\hat x)^d}}{1-d^2\hat x\frac{\lambda (1-\hat x)^d}{1+\lambda (1-\hat x)^d}}= \frac{1+d\hat x}{1-d^2\hat x^2} = \frac{1}{1-d\hat x},
\end{align*}
as claimed.
\end{proof}

We now present the proof of \cref{lem:sub-SI-tree}. 

\begin{proof}[Proof of \cref{lem:sub-SI-tree}]
	If the root $r$ has degree at most $d=\Delta-1$, then \cref{lem:sub-SI-tree} follows immediately from \cref{lem:sub-tree-SI}.
	In the following, we assume that $r$ has degree $\Delta=d+1$.
	For each $v \in L(r)$, let $T^v := T \setminus T_v$ denote the subtree obtained by removing the subtree $T_v$ rooted at $v$ from $T$.
	By the Markov property of the hardcore model on trees, namely, when the root $r$ is fixed, the conditional distributions on each subtree $T_v$ where $v \in L(r)$ are jointly independent,
	we observe that
	\begin{align*}
		\left| \Psi_{\mu_{T^v,\lambda}}(r,w) \right| = \left| \Psi_{\mu_{T,\lambda}}(r,w) \right|
	\end{align*}
	for all $v \in L(r)$ and $w \in V(T^v) = V \setminus V(T_v)$.
	Therefore, we deduce that
	\begin{align*}
		\sum_{v \in L(r)} \Phi(T^v,\lambda) = d \Phi(T,\lambda) + \left| \Psi_{\mu_{T,\lambda}}(r,r) \right|
		= d \Phi(T,\lambda) + 1.
	\end{align*}
	Notice that $T^v \in \TT_d$ for each $v$, and thereby we deduce from \cref{lem:sub-tree-SI} that
	\begin{align*}
		\Phi(T,\lambda) = \frac{1}{d} \left( \sum_{v \in L(r)} \Phi(T^v,\lambda) - 1 \right) \le \frac{1}{d} \left( \frac{d+1}{1-d\hat{x}} - 1 \right) = \frac{1+\hat{x}}{1-d\hat{x}},
	\end{align*}
	as claimed.
\end{proof}

Our proof of \cref{lem:sub-tree-SI} implicitly established the known fact that $\hat{x} < 1/d$ when $\lambda < \lambda_c(\Delta)$ \cite{li2013correlation}; in fact, $\hat{x} = 1/d$ when $\lambda = \lambda_c(\Delta)$.
This can be seen from \cref{eq:valid} by plugging in $q = \hat{x}$ and noticing $\frac{\hat{x}}{1-\hat{x}} = \lambda (1-\hat{x})^d$. 
In the following lemma, we give a more precise upper bound on $\hat{x}$, which allows us to bound the spectral independence constant by a function of the maximum degree $\Delta$ and the slackness parameter $\delta$ in \cref{thm:SI-subcritical}.

\begin{lemma} \label{lem:sub-fixed}
    Let $d \ge 2$ be an integer and $\delta \in (0,1)$ be a real number. Suppose $\lambda = (1-\delta) \lambda_c(\Delta)$ is the fugacity where $\Delta = d+1$.
    Let $\hat x$ be the unique fixed point of the function $F_{d,\lambda}(x) = \frac{\lambda(1-x)^d}{1+\lambda(1-x)^d}$ for $x\in[0,1]$.
    Then, it holds that
    \begin{align*}
        \hat x\leq \frac{1}{d}\tp{1-\frac{d-1}{2d}\delta}.
    \end{align*}
\end{lemma}

\begin{proof}
    Plugging $\lambda = (1-\delta) \lambda_c(\Delta)$ into $F_{d,\lambda}(\hat x)=\hat x$, we have
    \begin{align} \label{eq:fixed}
        \frac{(1-\delta)\lambda_c(\Delta)(1-\hat x)^d}{1+(1-\delta)\lambda_c(\Delta)(1-\hat x)^d}=\hat x.
    \end{align} 
    Fixing $\Delta$, there is a one-to-one correspondence between $\hat x$ and $\delta$ by \cref{eq:fixed}.
    We can view $\hat x = \hat x (\delta)$ as a function of $\delta$ and vice versa.
    Specifically, from \cref{eq:fixed}, we get
    \begin{align*}
        \delta(\hat x) = 1-\frac{1}{\lambda_c(\Delta)}\frac{\hat x}{(1-\hat x)^{d+1}}.
    \end{align*}
    Observe that $\delta(\hat x)$ is monotonically decreasing, with $\delta(0)=1$ and $\delta(\frac{1}{d})=0$.
    Differentiating both sides with respect to $\hat x$, we have
    \begin{align*}
        \delta'(\hat x) &= -\frac{1}{\lambda_c(\Delta)}\tp{\frac{1}{(1-\hat x)^{d+1}}+\frac{(d+1)\hat x}{(1-\hat x)^{d+2}}}\\
        &=-\frac{1}{\lambda_c(\Delta)}\frac{1+d\hat x}{(1-\hat x)^{d+2}} < 0.
    \end{align*}
    Notice that $\hat x(0)=\frac{1}{d}$ since $\delta(\frac{1}{d}) = 0$.
    By substituting, we have $\delta'(\frac{1}{d})=-\frac{2d^2}{d-1}$, from which we deduce that
    \begin{align*}
        \hat x'(0)=\frac{1}{\delta'(\frac{1}{d})}=-\frac{d-1}{2d^2}.
    \end{align*}
    Further, by the monotonicity of the function $\delta'(\hat{x})$, we know that $\delta''(\hat x)\leq0$ for all $\hat x\in[0,1/d]$.
    Therefore, for all $\delta\in(0,1)$, it holds
    \begin{align*}
        \hat x''(\delta)=-\frac{\delta''(\hat x)}{(\delta'(\hat x))^3} \leq 0.
    \end{align*}
    This implies that $\hat x(\delta)$ is concave on $(0,1)$, 
    and it follows that for all $\delta \in (0,1)$, we have
    \begin{align*}
        \hat x(\delta)\leq\hat x(0)+\delta \hat x'(0)=\frac{1}{d}-\frac{d-1}{2d^2}\delta=\frac{1}{d}\tp{1-\frac{d-1}{2d}\delta},
    \end{align*}
    which shows the inequality we desired.
\end{proof}

We end this subsection with the proof of \cref{thm:SI-subcritical}.
We introduce some notations needed for the proof.
Let $G=(V,E)$ be a graph.
For any $S \subseteq V$, let $\partial S$ denote the set of neighbors of $S$ in $G$, i.e., $\partial S = \set{ v \in V \setminus S \mid \exists u \in S, \set{u, v} \in E }$; and let $G[S]$ denote the subgraph induced in $G$ by $S$, i.e., the graph with vertex set $S$ and edge set consisting of all edges of $G$ that have both endpoints in $S$.

\begin{proof}[Proof of \cref{thm:SI-subcritical}]
    For any graph $G=(V,E)$ with maximum degree $\Delta$ and $\lambda=(1-\delta) \lambda_c(\Delta)$, we first show that $\mu_{G,\lambda}$ satisfies $\ell_\infty$-spectral independence with constant $\frac{1+\hat{x}}{1-d\hat{x}}$.
	Consider any pinning $\tau \in \{0,1\}^\Lambda$ satisfying $\Pr[\sigma\sim\mu_{G,\lambda}]{\sigma_\Lambda = \tau}>0$. For $i\in\set{0,1}$, let $S_i$ be the set of vertices that are pinned to $i$, i.e., $S_i=\set{v\in \Lambda\vert \tau_v=i}$.
    Then, the conditional Gibbs distribution $\mu_{G,\lambda}^{\tau}$ corresponds to the hardcore model on the induced subgraph $G'=G[V\setminus (S_0\cup S_1\cup\partial S_1)]$.
    It suffices to show
    \begin{align}\label{eq:induced-graph-SI}
\norm{\Psi_{\mu_{G',\lambda}}}_\infty\leq \frac{1+\hat{x}}{1-d\hat{x}}.
    \end{align}
    Since $G'$ has maximum degree at most $\Delta$, according to \cref{prop:SI-graph-tree}, for any $u\in V(G')$, there exists a tree $T = T_{\mathrm{SAW}}(G',u)$ rooted at $r$ with maximum degree at most $\Delta$, such that 
	\begin{align*}
		\sum_{v \in V(G')} \left| \Psi_{\mu_{G',\lambda}}(u,v) \right|
		\le \sum_{v \in V(T)} \left| \Psi_{\mu_{T,\lambda}}(r,v) \right| \leq \frac{1+\hat{x}}{1-d\hat{x}},
	\end{align*}
    where the last inequality follows from \cref{lem:sub-SI-tree}.
    Therefore, \cref{eq:induced-graph-SI} holds, and by definition, $\mu_{G,\lambda}$ satisfies $\ell_\infty$-spectral independence with constant $\frac{1+\hat{x}}{1-d\hat{x}}$.
    
    By \cref{lem:sub-fixed},
    \begin{align*}
        \frac{1+\hat{x}}{1-d\hat{x}}\leq \frac{1+\frac{1}{d}\tp{1-\frac{d-1}{2d}\delta}}{1-d\cdot\frac{1}{d}\tp{1-\frac{d-1}{2d}\delta}}
        =\frac{2}{\delta} \left(1+\frac{2}{d-1}\right)-1
        \leq \frac{2}{\delta} \left(1+\frac{2}{d-1}\right),
    \end{align*}
    which shows the inequality we desired.
\end{proof}

\subsection{Lower bound}

In this subsection, we prove \cref{thm:SI-lb}.
We first define the $t$-fold iteration of a mapping.
\begin{definition}[$t$-fold iteration]\label{def:t-fold-iter}
    Let $f:X \rightarrow X$ be a mapping.
    Define $f^{(t)}:X\rightarrow X$ for $t\in \N$ inductively by $f^{(0)} = \mathrm{id}$ and $f^{(t)} = f^{(t-1)} \circ f$ for $t \ge 1$.
    We call $f^{(t)}$ the \emph{$t$-fold iteration} of $f$.
\end{definition}
We are now ready to prove \cref{thm:SI-lb}.
\begin{proof}[Proof of \cref{thm:SI-lb}]
    Notice that $C_{\Delta,h,\lambda}\geq \Phi(T_{\Delta,h},\lambda)$.
    We first calculate $\Phi(T_{\Delta,h},\lambda)$.
    Let $V$ denote the vertex set of $T_{\Delta,h}$, and let $r$ denote its root.
    For every $v\in V$, we define 
    \begin{align*}
	p_v^{(h)} := \Pr[\mu_{T_v,\lambda}]{\sigma_v = 1},
    \end{align*}
where $T_v$ is the subtree of $T_{\Delta,h}$ rooted at $v$ that consists of all descendants of $v$.
Notice that for every $v\in L_k(r)$ with $1\leq k\leq h$, $T_v$ is the full $d$-ary tree with height $h-k$, where we recall that $L_k(r)$ denotes the set of vertices that are at distance $k$ from the root $r$.
Then by \cref{fact:tree-recursion}, we have
\begin{align}\label{eq:p_k^h}
    p_{v}^{(h)}=F^{(h-k+1)}_{d,\lambda}(0),
\end{align}
where $F_{d,\lambda}^{(h-k+1)}$ is the $(h-k+1)$-fold iteration of $F_{d,\lambda}$ (see \cref{def:t-fold-iter}).
For every $1\leq k\leq h$, we define 
\begin{align*}
    a_k^{(h)}:=\left| \Psi_{\mu_{T_{\Delta,h},\lambda}}(r,v_k) \right|,
\end{align*}
where $v_k\in L_k(r)$, and it is clear that $a_k^{(h)}$ does not depend on the choice of $v_k$.
For $k>h$, we set $a_k^{(h)}:=0$.
Then, 
\begin{align*}
    \Phi(T_{\Delta,h},\lambda)
    =1+\sum_{k=1}^{\infty} \sum_{v\in L_k(r)} \left| \Psi_{\mu_{T_{\Delta,h},\lambda}}(r,v) \right|
    =1+\sum_{k=1}^{\infty} \sum_{v\in L_k(r)}a_k^{(h)}
    =1+\sum_{k=1}^{\infty} (d+1)d^{k-1}a_k^{(h)}.
\end{align*}
When $k\leq h$, by \cref{fact:prod}, we have
\begin{align*}
    a_k^{(h)}=\prod_{j=1}^kp^{(h)}_{v_j}=\prod_{j=1}^kF_{d,\lambda}^{(h-j+1)}(0),
\end{align*}
where $r,v_1,\cdots,v_k$ is the unique path from $r$ to $v_k$ in $T_{\Delta,h}$, and the last equality follows from \cref{eq:p_k^h}.
Letting $h\rightarrow\infty$, it follows that  
\begin{align} \label{eq:limak}
    \lim_{h\rightarrow\infty} a_k^{(h)}=\prod_{j=1}^k \lim_{h\rightarrow\infty}F_{d,\lambda}^{(h-j+1)}(0)=\hat x^k,
\end{align}
where the last equality follows from the fact that $\lim_{h\rightarrow\infty}F_{d,\lambda}^{(h)}(0)=\hat x$ when $\lambda \le \lambda_c(\Delta)$ (see \cite{kelly1985stochastic,li2013correlation}).
Since $C_{\Delta,h,\lambda}\geq \Phi(T_{\Delta,h},\lambda)$, it holds that
\begin{align*}
    \liminf_{h\rightarrow\infty} C_{\Delta,h,\lambda}\geq\liminf_{h\rightarrow\infty} \Phi(T_{\Delta,h},\lambda)
    &=1+\liminf_{h\rightarrow\infty}\sum_{k=1}^{\infty} (d+1)d^{k-1}a_k^{(h)}\nonumber\\
    &\geq1+\sum_{k=1}^{\infty}\liminf_{h\rightarrow\infty}(d+1)d^{k-1}a_k^{(h)}\tag{by Fatou's lemma}\\
    &=1+ \sum_{k=1}^{\infty} (d+1)d^{k-1}\hat x^k\tag{by \cref{eq:limak}}\\
    &=1 + \frac{(d+1)\hat x}{1-d\hat x}=\frac{1+\hat x}{1-d\hat x}\tag{by $\hat x\leq \frac 1d$}.
\end{align*}
\cref{thm:SI-subcritical} shows that $C_{\Delta,h,\lambda}\leq\frac{1+\hat x}{1-d\hat x}$. Combined with the lower bound above, we have
\begin{align*}
    \lim_{h\rightarrow\infty} C_{\Delta,h,\lambda}=\frac{1+\hat x}{1-d\hat x},
\end{align*}
as desired.
\end{proof}

\section*{Acknowledgments}
The authors are grateful to Xiaoyu Chen and Xinyuan Zhang for their helpful and stimulating discussions.

\bibliographystyle{alpha}
\bibliography{ref.bib}

\appendix

\section{Proof of \texorpdfstring{\cref{prop:SS-mixing-full}}{Proposition 2.3}}
\label{subsec:pf-SI-mixing}

To prove \cref{prop:SS-mixing-full}, we need the following proposition, which can be concluded from Lemma 2.3, Lemma 3.13, Theorem 3.16, Lemma 3.17, and Lemma 3.19 of \cite{CCYZ24}.
\begin{proposition}[\cite{CCYZ24,chen2022localization}]\label{prop:SI-to-AT}
	Let $G$ be a $n$-vertex graph with maximum degree $\Delta\geq3$.
	Let $\mu_{G,\lambda}$ be the Gibbs distribution of the hardcore model on $G$ with fugacity $\lambda=\lambda_c(\Delta)$.
	Let $K:[0,1]\to \R_{\ge 0}$ be a function such that for every $\delta \in [0,1]$, the Gibbs distribution $\mu_{G,(1-\delta)\lambda}$
	satisfies spectral independence with constant $K(\delta)$.
	For any $\theta\in[0,1]$, if $(1-\theta)\lambda\leq\frac{1}{2\Delta}$, then the mixing time of Glauber dynamics satisfies
	\begin{align*}
		T_{\mathrm{mix}}(P_{\mathrm{GD}}) \leq 
		Cn^2\exp\tp{\int_0^{\theta}\frac{K(\delta)}{1-\delta}\-d\delta}\log\Delta,
	\end{align*}
	where $C>0$ is a constant independent of $n$. 
\end{proposition}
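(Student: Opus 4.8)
The plan is to prove this via the \emph{localization scheme} framework of \cite{CE22}, instantiated for the hardcore model exactly as in \cite{CCYZ24}; the stated bound is then assembled from the five cited results of \cite{CCYZ24}, which (roughly) reduce mixing to approximate tensorization of variance, set up the field localization, run the localization recursion, control the low-fugacity endpoint, and combine the pieces.

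\textbf{Reduction to approximate tensorization of variance.} First I would record the standard chain of implications: if $\mu_{G,\lambda}$ satisfies approximate tensorization of variance with constant $C_{\mathrm{AT}}$ (equivalently, the spectral gap of $P_{\mathrm{GD}}$ is at least $\tfrac{1}{C_{\mathrm{AT}}\,n}$), then $T_{\mathrm{mix}}(P_{\mathrm{GD}}) = O\!\big(C_{\mathrm{AT}}\,n\,\log(1/\mu_{\min})\big)$. For the hardcore model $\mu_{\min}\ge(\lambda/(1+\lambda))^n$ and $\lambda\le 2\lambda_c(\Delta)=O(1/\Delta)$, so $\log(1/\mu_{\min})=O(n\log\Delta)$. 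Hence it suffices to show $C_{\mathrm{AT}}\le O(1)\cdot\exp\!\big(\int_0^\theta\frac{K(\eta)}{1-\eta}\diff\eta\big)$, after which $T_{\mathrm{mix}}(P_{\mathrm{GD}}) = O\!\big(\exp(\int_0^\theta\tfrac{K(\eta)}{1-\eta}\diff\eta)\cdot n\cdot n\log\Delta\big)$, which is exactly the claimed bound.

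\textbf{Field localization and the annealing integral.} Next I would build the \emph{field localization scheme} interpolating between $\mu_{G,\lambda}$ and $\mu_{G,(1-\theta)\lambda}$, realized via the field dynamics of \cite{CFYZ22} (the negative-field localization in the language of \cite{CE22}): a martingale of measures $(\mu_\eta)_{\eta\in[0,\theta]}$ with $\mu_0=\mu_{G,\lambda}$ and $\mathbb{E}[\mu_\eta]=\mu_{G,\lambda}$, in which the conditional measure at parameter $\eta$ is precisely the hardcore model at fugacity $(1-\eta)\lambda$ on a random induced subgraph, that is, with a random subset of vertices pinned to ``unoccupied''. Advancing $\eta$ by $\diff\eta$ corresponds to injecting an amount $\diff\eta/(1-\eta)$ of external field, because $\diff\log((1-\eta)\lambda)=-\diff\eta/(1-\eta)$, and this is the origin of the weight $1/(1-\eta)$. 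By hypothesis each conditional measure along the path, being a pinned hardcore model at fugacity $(1-\eta)\lambda$, satisfies $\ell_\infty$ spectral independence with constant $K(\eta)$, so the variance localization recursion of \cite{CE22,CCYZ24} yields $C_{\mathrm{AT}}\le C_{\mathrm{base}}\cdot\exp\!\big(\int_0^\theta\frac{K(\eta)}{1-\eta}\diff\eta\big)$, where $C_{\mathrm{base}}$ is the approximate tensorization constant at the endpoint $\eta=\theta$. Since there the fugacity $(1-\theta)\lambda\le\frac{1}{2\Delta}$ lies in the Dobrushin uniqueness region, $\mu_{G,(1-\theta)\lambda}$ and all of its pinnings satisfy approximate tensorization of variance with an absolute constant (by path coupling), so $C_{\mathrm{base}}=O(1)$, and combining with the previous paragraph completes the proof.

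\textbf{Main obstacle.} The crucial --- and most delicate --- point is that the whole argument must be routed through \emph{variance} (Poincar\'e) localization rather than \emph{entropy} (modified log-Sobolev) localization. Near criticality the relevant spectral independence constant is as large as $\Theta(\sqrt{n})$, and an entropy-based recursion compounds a factor tied to the marginal lower bound $b=\Theta(1/\Delta)$ over $\Theta(\sqrt{n})$ effective levels, producing a hopeless $\exp(\Theta(\sqrt{n}\log\Delta))$; this is precisely why the $n^2$ in the statement (and the exponent in the final mixing time) cannot be improved to near-linear by this method. The variance recursion of \cite{CE22}, by contrast, only accumulates the \emph{additive} quantity $\int_0^\theta\frac{K(\eta)}{1-\eta}\diff\eta$, which --- once $K$ is instantiated for the application to \cref{prop:SS-mixing-full} --- is merely $O_\alpha(\log n)$, so its exponential is polynomial. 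Carrying this out rigorously requires checking (i) that the field localization never leaves the family of pinned reduced-fugacity hardcore models, so the hypothesis on $K$ applies along the entire path, and (ii) that the variance recursion yields exactly the multiplicative factor $\exp\!\big(\int_0^\theta\frac{K(\eta)}{1-\eta}\diff\eta\big)$ with no hidden dependence on $n$; both are the technical content of the cited lemmas of \cite{CCYZ24}.
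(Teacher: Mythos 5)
Your proposal is essentially the same argument the paper relies on: the paper gives no self-contained proof of this proposition but imports it by citing Lemma 2.3, Lemma 3.13, Theorem 3.16, Lemma 3.17, and Lemma 3.19 of \cite{CCYZ24} (built on the localization/field-dynamics framework of \cite{CE22,CFYZ22}), and your sketch — reduce to approximate tensorization of variance, anneal via the negative-field localization whose conditional laws are pinned hardcore measures at fugacity $(1-\eta)\lambda$ so that the $\ell_\infty$ spectral independence hypothesis applies along the path and yields the factor $\exp\bigl(\int_0^\theta \frac{K(\eta)}{1-\eta}\,\mathrm{d}\eta\bigr)$, then use the Dobrushin-regime endpoint $(1-\theta)\lambda\le\frac{1}{2\Delta}$ and $\log(1/\mu_{\min})=O_\alpha(n\log\Delta)$ — is a faithful reconstruction of exactly that chain. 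The technical checks you defer (that the localization stays within pinned reduced-fugacity hardcore models, and that the variance recursion produces precisely this integral with no hidden $n$-dependence) are the content of the cited lemmas, which is the same level of detail at which the paper treats them.
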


We now present the proof of \cref{prop:SS-mixing-full}.
\begin{proof}[Proof of \cref{prop:SS-mixing-full}]
	For every $\delta \in [0,1]$, $\mu_{G,(1-\delta)\lambda}$ satisfies spectral independence with constant $K(\delta)$, where $K(\delta)=\min\set{\frac{\rho}{\delta},n}$.
    The bound $K(\delta)\leq n$ holds because $\norm{\Psi_{\mu_{G,(1-\delta)\lambda}}}_\infty\leq n$.
	
	\noindent When $\frac{\rho}{n}\leq \theta$, i.e., $n \geq \frac{\rho}{\theta}$, it holds that 
	\begin{align*}\label{eq:expSI}
		\exp\tp{\int_0^{\theta}\frac{K(\delta)}{1-\delta}\-d\delta}
		&\leq \exp\tp{n\int_0^{\frac{\rho}{n}}\frac{1}{1-\delta}\-d\delta
			+\int_{\frac{\rho}{n}}^{\theta}\frac{\rho}{\delta}\frac{1}{1-\delta}\-d\delta}\\
        &=\exp\tp{n\log\tp{\frac{n}{n-\rho}}+\rho\log\tp{\frac{\theta}{1-\theta}}+\rho\log\tp{\frac{n-\rho}{\rho}}}\\
        &\leq \exp\tp{n\tp{\frac{n}{n-\rho}-1}+\rho\log\tp{\frac{\theta}{1-\theta}}+\rho\log\tp{\frac{n}{\rho}}}\\
        &\leq \exp\tp{\frac{1}{1-\theta}\rho+\rho\log\tp{\frac{\theta}{1-\theta}}+\rho\log\tp{\frac{n}{\rho}}}.
    \end{align*}
	Let $\theta=\frac{23}{24}$, when $n \geq \frac{\rho}{\theta}$, it holds that 
	\begin{align*}
		\exp\tp{\int_0^{\theta}\frac{K(\delta)}{1-\delta}\-d\delta}
		&\leq \exp \tp{24 \rho+\rho\log 23+ \rho \log\tp{\frac{n}{\rho}}}\\
		&=\exp\tp{24 \rho+\rho\log 23-\rho\log \rho}n^{\rho}=O(n^{\rho}).
	\end{align*}
	Notice that $(1-\theta)\lambda_c(\Delta)\leq\frac{1}{24}\cdot\frac{12}{\Delta}=\frac{1}{2\Delta}$, since $\lambda_c(\Delta)\leq\frac{12}{\Delta}$ for all $\Delta\geq 3$. 
	Then, for $n$ sufficiently large, specifically $n \geq \frac{\rho}{\theta}=\frac{24}{23}\rho$, by \cref{prop:SI-to-AT}, we have
	\begin{align*}
		T_{\mathrm{mix}}(P_{\mathrm{GD}}) &\leq 
		Cn^2\exp\tp{\int_0^{\theta}\frac{K(\delta)}{1-\delta}\-d\delta}\log\Delta=O\tp{n^{2+\rho}\log\Delta},
	\end{align*}
	as desired.
\end{proof}

\end{document}